\def\pt#1#2{\frac{\partial #1}{\partial #2}} 
\begin{document}

\allowdisplaybreaks

\renewcommand{\thefootnote}{$\star$}

\newcommand{\arXivNumber}{1510.00181}

\renewcommand{\PaperNumber}{012}

\FirstPageHeading

\ShortArticleName{The Hojman Construction and Hamiltonization of Nonholonomic Systems}

\ArticleName{The Hojman Construction and Hamiltonization\\ of Nonholonomic Systems\footnote{This paper is a~contribution to the Special Issue
on Analytical Mechanics and Dif\/ferential Geometry in honour of Sergio Benenti.
The full collection is available at \href{http://www.emis.de/journals/SIGMA/Benenti.html}{http://www.emis.de/journals/SIGMA/Benenti.html}}}

\Author{Ivan~A.~{BIZYAEV}~$^{\dag\ddag}$, Alexey~V.~{BORISOV}~$^{\dag\S}$ and Ivan~S.~{MAMAEV}~$^\dag$}

\AuthorNameForHeading{I.A.~Bizyaev, A.V.~Borisov and I.S.~Mamaev}

\Address{$^\dag$~Udmurt State University, 1 Universitetskaya Str., Izhevsk, 426034 Russia}
\EmailD{\href{mailto:bizaev_90@mail.ru}{bizaev\_90@mail.ru}, \href{mailto:borisov@rcd.ru}{borisov@rcd.ru}, \href{mailto:mamaev@rcd.ru}{mamaev@rcd.ru}}

\Address{$^\ddag$~St.~Petersburg State University, 1 Ulyanovskaya Str., St.~Petersburg, 198504 Russia}

\Address{$^\S$~National Research Nuclear University MEPhI, 31 Kashirskoe highway, Moscow, 115409 Russia}

\ArticleDates{Received October 05, 2015, in f\/inal form January 26, 2016; Published online January 30, 2016}

\Abstract{In this paper, using the Hojman construction, we give
examples of various Poisson brackets which dif\/fer from those which are
usually analyzed in Hamiltonian mechanics. They possess a nonmaximal rank,
and in the general case an invariant measure and Casimir functions can be
globally absent for them.}

\Keywords{Hamiltonization; Poisson bracket; Casimir functions;
invariant measure; nonholonomic hinge; Suslov problem; Chaplygin sleigh}

\Classification{37J60; 37J05}

\renewcommand{\thefootnote}{\arabic{footnote}}
\setcounter{footnote}{0}

\section{Introduction}\label{section0}

This paper is dedicated to Professor S.~Benenti on the occasion of his
60th birthday. He has made an essential contribution to the development of
various branches of mechanics and mathematics. In nonholonomic mechanics,
which is the subject of this paper, he proposed a new constructive form of
the equations of motion \cite{Ben02} and gave a new example of nonlinear
nonholonomic constraint~\cite{Ben01}.

In this paper we consider the possibility of representing some well-known
and new nonholonomic systems in Hamiltonian, more precisely, Poisson form.
We note from the outset that such a representation is usually possible for
reduced (incomplete) equations \cite{Reduction} and after rescaling time
(conformal Hamiltonianity).

Hamiltonian systems are the best-studied class of systems from the
viewpoint of the theory of integrability, stability, topological analysis,
perturbation theory (KAM theory) etc. Therefore, the investigation of the
possibility of representing the equations of motion in conformally
Hamiltonian form is an important, but poorly studied problem, which is
usually called the {\it Hamiltonization problem}.

There are various well-known obstructions to Hamiltonization, which are
examined in detail in \cite{4bbm} (see also \cite{KozlovBizyaev}). This
problem has several aspects: local, semilocal, and global. In this sense
the Hamiltonization problem is in many respects analogous to the problem
of existence of analytic integrals and a~smooth invariant measure (see
\cite{Kozlovv2}).

In nonholonomic mechanics, the reducing multiplier method of
Chaplygin~\cite{b21-3} (developed further in \cite{BBM_ch}) provides a
powerful tool for the Hamiltonization of equations of motion. However,
this method does not apply to the problems presented below due to the fact
that it is impossible to represent the equations of motion in the form of
the Chaplygin system, or the examples considered exhibit everywhere the
absence of a smooth invariant measure with density depending only on
positional variables (as required by the Chaplygin method). Other
Hamiltonization methods are examined in~\cite{b41}. We note that the
above-mentioned paper is of formal character and gives no nontrivial
mechanical examples encountered in applications.

Poisson brackets whose appearance cannot be explained by employing the
Chaplygin method were obtained in \cite{BBM02, Tsiganov3,BKM, Tsiganov1, Tsiganov2} with the help of an explicit algebraic ansatz. In this paper it
is shown that these brackets can be obtained using an algorithm that we
call the {\it Hojman construction} (its foundations were laid down already
by S.~Lie). It is based on the existence of conformal symmetry f\/ields,
i.e., symmetry f\/ields obtained after rescaling time, and was developed
further in \cite{Ho2, Hoj}.

The rank of the Poisson bracket obtained by using the Hojman construction
is, as a rule, smaller than the dimension of the phase space; for it a
smooth (or even singular) invariant measure and global Casimir functions
may be absent in the general case, and the system's behavior itself may
dif\/fer greatly from Hamiltonian behavior, moreover, there may exist limit
cycles. Therefore, the resulting Poisson brackets are of rather limited
utility. In practice, such a~Hamiltonian representation may turn
out to be useless (especially for inf\/inite-dimensional systems
\cite{Ho2}).

Indeed, the basic methods of Hamiltonian mechanics have been developed for
canonical systems. However, as shown in \cite{BBB}, the
representation found by us can be useful for the investigation of
stability problems.

Problems of Hamiltonization and various aspects of the dynamics
of nonholonomic systems are investigated in \cite{Tsiganov3, Kaz,drift, Bizz, SS,
faso,Fasso, Tsiganov1,Tsiganov2}.

\section[Tensor invariants and Hamiltonization of dynamical systems]{Tensor invariants and Hamiltonization\\ of dynamical systems}\label{section1}

\subsection{Basic notions and def\/initions}

Let a dynamical system be given on some manifold $\mathcal{M}^n$, a phase space.
\begin{gather}
\label{eq01}
\dot{\boldsymbol{x}}=\boldsymbol{v}(\boldsymbol{x}),
\end{gather}
where $\boldsymbol{x}=(x_1,\dots, x_n)$ are local coordinates, $\boldsymbol{v}=(v_1(\boldsymbol{x}),\dots, v_n(\boldsymbol{x}) )$.

The evolution of an arbitrary (smooth) function on $\mathcal{M}^n$ along the vector f\/ield \eqref{eq01} is given by the linear
dif\/ferential operator
\begin{gather*}
\dot{F}=\boldsymbol{v}(F)=\sum_{i=1}^n v_i\frac{\partial F}{\partial x_i}.
\end{gather*}

As is well known, the behavior of a dynamical system is in many respects determined by tensor invariants (conservation laws),
from which one can, as a rule, draw the main data on its dynamics.
For exact formulations see~\cite{BMzs}. We only note that three kinds of tensor invariants are encountered particularly frequently:
\begin{itemize}\itemsep=0pt
\item[--] {\it the first integral} $F(\boldsymbol{x})$ {\it for which}
\begin{gather*}
\dot{F}=\boldsymbol{v}(F)=0;
\end{gather*}
\item[--] {\it symmetry field} $\boldsymbol{u}(\boldsymbol{x})$ {\it for which}
\begin{gather*}
[\boldsymbol{v}, \boldsymbol{u}]=0;
\end{gather*}
\item[--] {\it the invariant measure} $\mu=\rho(\boldsymbol{x})dx_1 \wedge \cdots \wedge dx_n$
{\it whose density $\rho(\boldsymbol{x})$ is everywhere larger than zero and satisfies the Liouville equation}
\begin{gather}
\label{eq1_2}
{\rm div}(\rho \boldsymbol{v})=0,
\end{gather}
\end{itemize}
where $\wedge$ denotes the exterior product. In this paper, unless otherwise stated, all geometric objects (f\/irst integrals, vector f\/ields etc.) are assumed to be analytical
on $\mathcal{M}^n$.

Another important tensor invariant is the Poisson structure, which allows
the equations of motion to be represented in Hamiltonian form. In
applications, such a representation can usually be obtained only after
rescaling time as $d\tau=\mathcal{N}(\boldsymbol{x})dt$ (if the system is
a priori not Hamiltonian), i.e., the equations of motion~\eqref{eq01} are
represented in conformally Hamiltonian form
\begin{gather}
\label{eq02}
\dot{\boldsymbol{x}}=\mathcal{N}(\boldsymbol{x}) {\bf J}(\boldsymbol{x}) \frac{\partial H}{\partial \boldsymbol{x}},
\end{gather}
where $H(\boldsymbol{x})$ is a Hamiltonian that is a f\/irst integral, and ${\bf J}(\boldsymbol{x})=\|J_{ij}(\boldsymbol{x}) \|$ is a Poisson structure,
i.e., a skew-symmetric tensor f\/ield satisfying the Jacobi identity
\begin{gather}
\label{eqJ}
\sum_{l=1}^n\left(J_{lk}\pt{J_{ij}}{x_l}+J_{li}\pt{J_{jk}}{x_l}+J_{lj}\pt{J_{ki}}{x_l}\right)=0, \qquad i,j,k=1,\ldots n,
\end{gather}
and $\mathcal{N}(\boldsymbol{x})$ is a scalar function~-- a reducing
multiplier. The Poisson structure~${\bf J}(\boldsymbol{x})$ allows one to
def\/ine in a natural way the Poisson bracket of the functions~$f$ and~$g$
by the formula
\begin{gather*}
\{f(\boldsymbol{x}), g(\boldsymbol{x}) \}=\sum_{i,j=1}^nJ_{ij}\frac{\partial f }{\partial x_i}
\frac{\partial g }{\partial x_j}.
\end{gather*}

\begin{remark}
Time rescaling cannot be applied at points where the reducing multiplier $\mathcal{N}(\boldsymbol{x})$ vanishes,
hence, Hamiltonization is possible only on the set $\widetilde{\mathcal{M}}^n=\{\boldsymbol{x}\in \mathcal{M}^n \, | \, \mathcal{N}(\boldsymbol{x})\neq 0 \}$.
As a~result, the behavior of the trajectories in general (on the entire phase space~$\mathcal{M}^n$) can considerably dif\/fer from
the behavior of the trajectory of Hamiltonian systems. In particular, not only tori, but also two-dimensional integral manifolds can be arbitrary in this case,
see, for example, the Suslov problem~\cite{BKM}.
\end{remark}

In many examples the Poisson structure turns out to be
\begin{gather*}
\operatorname{rank} {\bf J}<n.
\end{gather*}
As is well known, the entire phase space (in the domain of the constant
rank~${\bf J}$) is foliated by symplectic leaves $\mathcal{O}$ of
dimension $\dim \mathcal{O}=\operatorname{rank} {\bf J}$. In the simplest
case the symplectic leaves are given as the level surfaces of a set of
global Casimir functions
\begin{gather}
\mathcal{O}_{\boldsymbol{c}}=\big\{ \boldsymbol{x} \in \mathcal{M}^n \, | \, C_1(\boldsymbol{x})=c_1,\dots, C_m(\boldsymbol{x})=c_m \big\},\nonumber\\
\sum_{j=1}^n J_{ij}(\boldsymbol{x})\frac{\partial C_k}{\partial x_j}=0, \qquad i=1,\dots, n, \qquad k=1,\dots, m=n - \operatorname{rank} {\bf J}.\label{ma01}
\end{gather}
Nevertheless, in most of the examples considered below the number of
global Casimir functions turns out to be less than~$m$. This may lead to
unusual (from the viewpoint of the standard theory of Hamiltonian systems)
behavior of trajectories of the system~\eqref{eq02}.

We note that there are three levels of analysis of the problem of the existence of tensor invariants:
\begin{itemize}\itemsep=0pt
\item[--] {\it local level} -- in a neighborhood of a nonsingular point of the vector f\/ield~\eqref{eq01},
\item[--] {\it semilocal level}~-- in a neighborhood of invariant sets of the system~\eqref{eq01}, such as
 f\/ixed points, periodic orbits, invariant tori etc.,
\item[--] {\it global level} -- on the entire phase space $\mathcal{M}^n$.
\end{itemize}

From the local point of view, obstructions to the existence of any tensor
invariants are not encountered by virtue of the rectif\/ication theorem for
vector f\/ields. Therefore, in what follows it is implied that we consider
the system from the semilocal and global points of view.

\subsection{The general Hojman construction}

Consider a conformally Hamiltonian system in the form \eqref{eq02}. From
the point of view of integrability, the case where
\begin{gather*}
\operatorname{rank} {\bf J}=2
\end{gather*}
is regarded as the simplest case. As a rule, in this case it is assumed
that the system \eqref{eq02} admits a natural (global) restriction to the
two-dimensional symplectic leaf $\mathcal{O}^2$ and reduces to a
Hamiltonian system with one degree of freedom, and its trajectories are
the level lines of the restriction of the
Hamiltonian~$H|_{\mathcal{O}^2}$.

It turns out that such a conclusion cannot be drawn in the general case,
when the properties of solutions~\eqref{ma01} are unknown to us. As
already noted, a full set of global Casimir functions can be absent in
${\bf J}$ (some of them are def\/ined only locally). As a result, the
symplectic leaf can be immersed in the phase space in a fairly complicated
(in particular, chaotic) way, and the above-described picture is not
always realized. Nevertheless, the above Poisson structure turns out to
be useful for the investigation of stability problems~\cite{Kon}.

Consider a Poisson structure ${\bf J}(\boldsymbol{x})$ of rank 2 on
$\mathcal{M}^n$ for which there is a pair of (globally def\/ined) vector
f\/ields $\boldsymbol{v}(\boldsymbol{x})$ and
$\boldsymbol{u}(\boldsymbol{x})$ such that
\begin{gather}
\label{eq06}
{\bf J}(\boldsymbol{x})=\boldsymbol{v}(\boldsymbol{x})\wedge\boldsymbol{u}(\boldsymbol{x}).
\end{gather}

\begin{remark}
Every skew-symmetric bivector f\/ield of constant rank $2$ locally admits such a~representation.
\end{remark}

According to the Darboux theorem, the Jacobi identity
\eqref{eqJ} holds only in the case where the distribution given by these vector f\/ields is integrable
\begin{gather}
\label{meq06}
[\boldsymbol{v},\boldsymbol{u}]=\mu_1(\boldsymbol{x})\boldsymbol{v} + \mu_2(\boldsymbol{x})\boldsymbol{u}.
\end{gather}
The Casimir functions are simultaneously the integrals of the f\/ields $\boldsymbol{v}$ and $\boldsymbol{u}$:
\begin{gather*}
\boldsymbol{u}(C_k)=\boldsymbol{v}(C_k)=0, \qquad k=1 ,\dots, m'.
\end{gather*}

\begin{remark}
Condition \eqref{meq06} was known already to S.~Lie; it can also be
obtained if the Jacobi identity is rewritten by means of the Schouten
bracket $[[ {\bf J}, {\bf J} ]]=0$. Using the properties of the Schouten
bracket, for the Poisson structure~\eqref{eq06} we f\/ind~\cite{Kon}:
\begin{gather*}
\mbox{}[[ {\bf J}, {\bf J} ]]=2 [\boldsymbol{v},\boldsymbol{u}]\wedge\boldsymbol{v}\wedge\boldsymbol{u}=0,
\end{gather*}
whence we obtain condition~\eqref{meq06}.
\end{remark}

It turns out that in many examples such decomposable Poisson structures allow one to Hamiltonize in a natural way the dynamical systems in question. This approach was
proposed in \cite{Ho2, Hoj}. We present the necessary results and omit the proofs, which are completely straightforward.

\begin{theorem}[Hojman~\cite{Hoj}]
\label{t01}
Suppose that the system \eqref{eq01} possesses a first integral~$H$ and a vector field $\boldsymbol{u}(\boldsymbol{x})$ such that
\begin{gather}
\label{eq04}
[\boldsymbol{v}, \boldsymbol{u}]=\mu(\boldsymbol{x})\boldsymbol{v}, \qquad
H_u(\boldsymbol{x})=\boldsymbol{u}(H)\not \equiv 0,
\end{gather}
then the initial system \eqref{eq01} can be represented in conformally Hamiltonian form
\begin{gather}
\label{eq07}
\dot{\boldsymbol{x}}=H_u^{(-1)}(\boldsymbol{x}){\bf J}^{(2)}(\boldsymbol{x}) \frac{\partial H}{\partial \boldsymbol{x}}, \qquad {\bf J}^{(2)}(\boldsymbol{x})=\boldsymbol{v}\wedge\boldsymbol{u}.
\end{gather}
\end{theorem}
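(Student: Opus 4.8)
The plan is to exhibit the claimed Poisson structure explicitly and verify the three things that make \eqref{eq07} a genuine conformally Hamiltonian representation: that ${\bf J}^{(2)}=\boldsymbol{v}\wedge\boldsymbol{u}$ satisfies the Jacobi identity, that $H$ is a Casimir-compatible first integral of the bracket, and that the rescaled Hamiltonian vector field reproduces $\boldsymbol{v}$. First I would recall, using the Schouten-bracket computation already displayed in the preceding remark, that $[[{\bf J}^{(2)},{\bf J}^{(2)}]]=2[\boldsymbol{v},\boldsymbol{u}]\wedge\boldsymbol{v}\wedge\boldsymbol{u}$; since the first hypothesis in \eqref{eq04} gives $[\boldsymbol{v},\boldsymbol{u}]=\mu\boldsymbol{v}$, which is proportional to $\boldsymbol{v}$, the wedge with $\boldsymbol{v}\wedge\boldsymbol{u}$ vanishes identically, so the Jacobi identity \eqref{eqJ} holds and ${\bf J}^{(2)}$ is a bona fide Poisson structure of rank $2$ (on the open set where $\boldsymbol{v}\wedge\boldsymbol{u}\neq 0$).

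Next I would compute the Hamiltonian vector field of $H$ with respect to ${\bf J}^{(2)}$. Writing the bivector as $\boldsymbol{v}\wedge\boldsymbol{u}$ means $J^{(2)}_{ij}=v_iu_j-v_ju_i$, so the associated vector field has $i$-th component $\sum_j J^{(2)}_{ij}\,\partial H/\partial x_j = v_i\,\boldsymbol{u}(H)-u_i\,\boldsymbol{v}(H)$. Here the two hypotheses combine: $\boldsymbol{v}(H)=\dot H=0$ because $H$ is a first integral of \eqref{eq01}, and $\boldsymbol{u}(H)=H_u(\boldsymbol{x})$, which by assumption is not identically zero. Hence ${\bf J}^{(2)}\,\partial H/\partial\boldsymbol{x}=H_u\,\boldsymbol{v}$, and multiplying by the reducing multiplier $\mathcal{N}=H_u^{(-1)}$ recovers exactly $\dot{\boldsymbol{x}}=\boldsymbol{v}(\boldsymbol{x})$ on the set $\{H_u\neq 0\}$. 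This is precisely \eqref{eq07}, and it simultaneously shows that $H$ is a first integral of the Poisson system (as \eqref{eq02} requires) since $\{H,H\}=0$ trivially.

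The only subtlety — and the step I expect to need the most care — is bookkeeping about domains of validity rather than any hard computation: the representation \eqref{eq07} is meaningful only on the open dense set where $H_u(\boldsymbol{x})\neq 0$ (the reducing multiplier blows up elsewhere) and where $\boldsymbol{v}\wedge\boldsymbol{u}$ has constant rank $2$ (so that the Darboux/symplectic-leaf picture applies and ${\bf J}^{(2)}$ is smooth of locally constant rank); the condition $H_u\not\equiv 0$ guarantees this set is nonempty. It is also worth remarking, as the paper stresses, that $\mu_2$ in \eqref{meq06} may be taken to be zero here, which is what forces the specific form $[\boldsymbol{v},\boldsymbol{u}]=\mu\boldsymbol{v}$ rather than the general integrability condition, and that this is exactly what is needed for the Schouten bracket to vanish. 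Everything else is the direct substitution above, so the proof is indeed, in the authors' words, completely straightforward.
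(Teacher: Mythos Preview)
Your proof is correct and is exactly the ``completely straightforward'' verification the paper alludes to but omits: check the Jacobi identity via the Schouten-bracket formula from the preceding remark, then compute ${\bf J}^{(2)}\partial H/\partial\boldsymbol{x}=\boldsymbol{u}(H)\,\boldsymbol{v}-\boldsymbol{v}(H)\,\boldsymbol{u}=H_u\,\boldsymbol{v}$ and divide by $H_u$. One small inaccuracy in your closing remark: the Schouten bracket already vanishes under the general integrability condition \eqref{meq06}, not only when $\mu_2=0$; the stronger hypothesis $[\boldsymbol{v},\boldsymbol{u}]=\mu\boldsymbol{v}$ is needed rather so that the reducing multiplier $H_u$ is itself a first integral, since $\boldsymbol{v}(H_u)=[\boldsymbol{v},\boldsymbol{u}](H)+\boldsymbol{u}(\boldsymbol{v}(H))=\mu\,\boldsymbol{v}(H)=0$, as the paper observes immediately after the theorem.
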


\begin{remark}
The above conformally Hamiltonian representation \eqref{eq07} is not def\/ined for points $\boldsymbol{x}$
 at which $H_u(\boldsymbol{x})=0$, and the symbol $\not \equiv$ means that~$H_u(\boldsymbol{x})$ must not be equal to zero for any~$\boldsymbol{x}$. 	
\end{remark}	

It follows from conditions \eqref{eq04} that the vector f\/ield $\boldsymbol{u}(\boldsymbol{x})$ is not tangent to the level surface
$H(\boldsymbol{x})={\rm const}$ and, moreover,
\begin{gather*}
\boldsymbol{v}(H_u)=0,
\end{gather*}
hence, in this case the reducing multiplier $H_u(\boldsymbol{x})$ is the
f\/irst integral of the system. If, in addition, the system possesses
additional tensor invariants, then a natural generalization of this result
holds (see \cite{Hoj} for details).

\begin{proposition}
\label{pro1} Suppose that the system~\eqref{eq01} satisfies the conditions
of Theorem~{\rm \ref{t01}} and, moreover, possesses the symmetry fields
$\boldsymbol{u}_1$ and $\boldsymbol{u}_2$ which define the integrable
distribution and preserve the Hamiltonian
\begin{gather*}
[\boldsymbol{u}_1, \boldsymbol{v}]=[\boldsymbol{u}_2, \boldsymbol{v}]=0, \qquad \boldsymbol{u}_1(H)=\boldsymbol{u}_2(H)=0,\\
[\boldsymbol{u}_1, \boldsymbol{u}_2]=\lambda_1(\boldsymbol{x})\boldsymbol{u}_1 + \lambda_2(\boldsymbol{x})\boldsymbol{u}_2,
\end{gather*}
then the system~\eqref{eq01} admits the conformally Hamiltonian representation
\begin{gather*}
\dot{\boldsymbol{x}}=H^{(-1)}_u(\boldsymbol{x}){\bf J}^{(4)}(\boldsymbol{x})\frac{\partial H}{\partial \boldsymbol{x}}, \\
{\bf J}^{(4)}(\boldsymbol{x})=\boldsymbol{v}\wedge\boldsymbol{u} + \boldsymbol{u_1}\wedge\boldsymbol{u_2}, \qquad\operatorname{rank} {\bf J}^{(4)}=4.
\end{gather*}
\end{proposition}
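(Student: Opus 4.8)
The plan is to verify, in turn, the three things that make the conformally Hamiltonian form \eqref{eq07} work for the new bivector ${\bf J}^{(4)}=\boldsymbol v\wedge\boldsymbol u+\boldsymbol u_1\wedge\boldsymbol u_2$: that it reproduces $\boldsymbol v$, that it is Poisson, and that its rank is $4$. The reproduction step is immediate from the action of a decomposable bivector on a $1$-form together with $\boldsymbol v(H)=0$ and $\boldsymbol u_1(H)=\boldsymbol u_2(H)=0$:
\[
(\boldsymbol v\wedge\boldsymbol u)(dH)=\boldsymbol u(H)\,\boldsymbol v-\boldsymbol v(H)\,\boldsymbol u=H_u\,\boldsymbol v,\qquad (\boldsymbol u_1\wedge\boldsymbol u_2)(dH)=\boldsymbol u_2(H)\,\boldsymbol u_1-\boldsymbol u_1(H)\,\boldsymbol u_2=0,
\]
so ${\bf J}^{(4)}\,dH=H_u\,\boldsymbol v$, and since $H_u$ is nowhere zero, dividing by it gives $\dot{\boldsymbol x}=H_u^{(-1)}{\bf J}^{(4)}\,dH=\boldsymbol v$; exactly as in Theorem~\ref{t01}, $H$ is then a first integral and $\boldsymbol v(H_u)=0$, so $H_u$ is again the reducing multiplier. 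The rank claim I would settle by computing ${\bf J}^{(4)}\wedge{\bf J}^{(4)}=2\,\boldsymbol v\wedge\boldsymbol u\wedge\boldsymbol u_1\wedge\boldsymbol u_2$ (the two ``pure'' wedges vanish since they repeat a factor) and ${\bf J}^{(4)}\wedge{\bf J}^{(4)}\wedge{\bf J}^{(4)}=0$ (only four distinct vector fields occur): the former is nonzero precisely where $\boldsymbol v,\boldsymbol u,\boldsymbol u_1,\boldsymbol u_2$ are linearly independent, which is the implicit nondegeneracy condition behind $\operatorname{rank}{\bf J}^{(4)}=4$, and the latter rules out rank $6$.

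The substance is the Jacobi identity, which I would obtain from the Schouten bracket as in the Remark preceding the statement. By bilinearity,
\[
[[{\bf J}^{(4)},{\bf J}^{(4)}]]=[[\boldsymbol v\wedge\boldsymbol u,\boldsymbol v\wedge\boldsymbol u]]+2[[\boldsymbol v\wedge\boldsymbol u,\boldsymbol u_1\wedge\boldsymbol u_2]]+[[\boldsymbol u_1\wedge\boldsymbol u_2,\boldsymbol u_1\wedge\boldsymbol u_2]].
\]
Using $[[\boldsymbol a\wedge\boldsymbol b,\boldsymbol a\wedge\boldsymbol b]]=2[\boldsymbol a,\boldsymbol b]\wedge\boldsymbol a\wedge\boldsymbol b$ recalled there, the first term is $2\mu\,\boldsymbol v\wedge\boldsymbol v\wedge\boldsymbol u=0$ because $[\boldsymbol v,\boldsymbol u]=\mu\boldsymbol v$, and the third is $2(\lambda_1\boldsymbol u_1+\lambda_2\boldsymbol u_2)\wedge\boldsymbol u_1\wedge\boldsymbol u_2=0$ because $[\boldsymbol u_1,\boldsymbol u_2]$ lies in the integrable distribution $\langle\boldsymbol u_1,\boldsymbol u_2\rangle$. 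Everything therefore reduces to the mixed term, which I would expand with the graded Leibniz rule for the Schouten bracket of two decomposable bivectors:
\[
[[\boldsymbol v\wedge\boldsymbol u,\boldsymbol u_1\wedge\boldsymbol u_2]]=[\boldsymbol v,\boldsymbol u_1]\wedge\boldsymbol u\wedge\boldsymbol u_2-[\boldsymbol v,\boldsymbol u_2]\wedge\boldsymbol u\wedge\boldsymbol u_1-[\boldsymbol u,\boldsymbol u_1]\wedge\boldsymbol v\wedge\boldsymbol u_2+[\boldsymbol u,\boldsymbol u_2]\wedge\boldsymbol v\wedge\boldsymbol u_1.
\]
The first two summands vanish because $\boldsymbol u_1,\boldsymbol u_2$ are symmetry fields of $\boldsymbol v$, i.e.\ $[\boldsymbol v,\boldsymbol u_1]=[\boldsymbol v,\boldsymbol u_2]=0$.

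The remaining piece, $-[\boldsymbol u,\boldsymbol u_1]\wedge\boldsymbol v\wedge\boldsymbol u_2+[\boldsymbol u,\boldsymbol u_2]\wedge\boldsymbol v\wedge\boldsymbol u_1$, is the one I expect to be the real obstacle, since the displayed hypotheses say nothing directly about $[\boldsymbol u,\boldsymbol u_1]$ and $[\boldsymbol u,\boldsymbol u_2]$. It vanishes once one knows that $\boldsymbol v\wedge[\boldsymbol u,\boldsymbol u_i]=0$ for $i=1,2$, i.e.\ that each $[\boldsymbol u,\boldsymbol u_i]$ is proportional to $\boldsymbol v$; this is exactly the statement that $\boldsymbol u_1,\boldsymbol u_2$ are symmetries not merely of $\boldsymbol v$ but of the rank-$2$ structure ${\bf J}^{(2)}=\boldsymbol v\wedge\boldsymbol u$ itself, since $\mathcal{L}_{\boldsymbol u_i}{\bf J}^{(2)}=[\boldsymbol u_i,\boldsymbol v]\wedge\boldsymbol u+\boldsymbol v\wedge[\boldsymbol u_i,\boldsymbol u]$. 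Granting this (the natural reading of ``symmetry fields $\boldsymbol u_1,\boldsymbol u_2$ which define the integrable distribution'' once the conformal symmetry $\boldsymbol u$ is also present), both surviving terms acquire a factor $\boldsymbol v\wedge\boldsymbol v=0$, so the mixed Schouten bracket vanishes, hence $[[{\bf J}^{(4)},{\bf J}^{(4)}]]=0$ and ${\bf J}^{(4)}$ is a genuine Poisson structure. Combining this with the reproduction of $\boldsymbol v$ and the rank computation gives the asserted conformally Hamiltonian representation; I would flag the control of $[\boldsymbol u,\boldsymbol u_1]$ and $[\boldsymbol u,\boldsymbol u_2]$ in the mixed term as the single point where care — and possibly a slightly strengthened hypothesis — is needed.
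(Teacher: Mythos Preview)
The paper does not actually give a proof of this proposition: immediately before Theorem~\ref{t01} it says ``We present the necessary results and omit the proofs, which are completely straightforward.'' So there is nothing to compare against, and your write-up is in fact more than the paper provides.

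Your argument is structured correctly, and the point you flag at the end is a genuine gap --- not in your proof, but in the proposition as stated. The displayed hypotheses control $[\boldsymbol v,\boldsymbol u]$, $[\boldsymbol v,\boldsymbol u_i]$, and $[\boldsymbol u_1,\boldsymbol u_2]$, but say nothing about $[\boldsymbol u,\boldsymbol u_1]$ or $[\boldsymbol u,\boldsymbol u_2]$, and without such control the mixed Schouten term need not vanish. A simple counterexample on $\mathbb{R}^5$: take $\boldsymbol v=\partial_1$, $\boldsymbol u=\partial_2$, $H=x_2$, $\boldsymbol u_1=x_2\,\partial_3$, $\boldsymbol u_2=\partial_4$. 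All displayed hypotheses hold (with $\mu=0$, $\lambda_1=\lambda_2=0$), yet $[\boldsymbol u,\boldsymbol u_1]=\partial_3$ and the mixed term contributes $\partial_3\wedge\partial_1\wedge\partial_4\neq 0$, so $[[{\bf J}^{(4)},{\bf J}^{(4)}]]\neq 0$. Thus some extra condition --- e.g.\ $[\boldsymbol u,\boldsymbol u_i]\in\operatorname{span}(\boldsymbol v)$, or simply $[\boldsymbol u,\boldsymbol u_i]=0$ --- is needed. In the paper's only application of Proposition~\ref{pro1} (the Chaplygin sleigh, Section~\ref{sen01}), one has $\boldsymbol u=v_1\,\partial_{v_1}+\omega\,\partial_\omega$ and $\boldsymbol u_1=\partial_x$, $\boldsymbol u_2=\partial_y$, for which indeed $[\boldsymbol u,\boldsymbol u_1]=[\boldsymbol u,\boldsymbol u_2]=0$; so the intended use case is covered, but the general statement requires the strengthening you anticipated.
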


Below we consider some nonholonomic mechanics problems illustrating the
applicability of the above theorem and allowing one to obtain Poisson
brackets with various (unusual) properties.

The following construction shows that any dynamical system admits a~rank~2 Hamiltonization
	in an extended phase space.

Let $\boldsymbol{v}$ denote the vector f\/ield that def\/ines the dynamics of a nonholonomic system on some manifold $\mathcal{M}$. Let $E\colon \mathcal{M} \rightarrow \mathbb{R}$ be the energy of the system, so that $\boldsymbol{v}(E) = 0$.
Consider the following rank~two Poisson structure in $\mathcal{M} \times \mathbb{R}$:
\begin{gather*}
{\bf J}=\boldsymbol{v}\wedge\frac{\partial}{\partial y},
\end{gather*}
where $y$ is the coordinate on the factor $\mathbb{R}$ of the Cartesian product~$\mathcal{M} \times \mathbb{R}$.
 The fact that~${\bf J}$ satisf\/ies the Jacobi identity is obvious from the
 fact that ${[\boldsymbol{v}, \frac{\partial}{\partial y}]=0}$.

 It is obvious that the Hamiltonian vector f\/ield of $y$ with respect to is $\boldsymbol{v}$.
 Moreover, $E$ is a~Casimir of~${\bf J}$.

This example shows how the symplectic leaves of a rank two Poisson structure can be immersed
in the phase space in a rather complicated way. Moreover, it also
illustrates that usually the construction of brackets in an extended phase
space cannot add anything new to understand the properties of a given
system.

\subsection{Invariant measure in Hamiltonian systems}

Let us brief\/ly discuss the problem of existence of an invariant measure
in Hamiltonian systems on $\mathcal{M}^n$ with a~Poisson structure
\begin{gather}
\label{mm}
\dot{\boldsymbol{x}}={\bf J}(\boldsymbol{x})\frac{\partial H}{\partial \boldsymbol{x}}, \qquad \operatorname{rank} {\bf J}<n.
\end{gather}
We recall that for a symplectic manifold the rank of ${\bf J}$ equals $n =
2k$ and that, according to the Liouville theorem, an invariant measure
exists for an arbitrary Hamiltonian. By analogy with this, if for the
system~\eqref{mm} with an arbitrary Hamiltonian there exists the same
invariant measure, it is called the {\it Liouville measure}. Its density
satisf\/ies the system of equations on~$\mathcal{M}^n$
\begin{gather*}
\sum_{j=1}^n\frac{\partial}{\partial x_j}(\rho J_{ij}(\boldsymbol{x}))=0, \qquad i=1,\ldots,n.
\end{gather*}
A manifold with this Poisson structure ${\bf J}(\boldsymbol{x})$ and volume form
$\rho(\boldsymbol{x})dx_1 \wedge \cdots \wedge dx_n$ is called a~{\it unimodular Poisson manifold}.

\begin{remark}
For the case of linear Poisson structures (Lie--Poisson brackets) the
unimodular Poisson manifolds correspond to the well-known unimodular Lie
algebras (see, e.g.,~\cite{KozIa}).
\end{remark}

Below we distinguish between three cases:
\begin{itemize}\itemsep=0pt
\item[--] {\it there is} an invariant measure (without singularities)
 which exists on the entire mani\-fold~$\mathcal{M}^n$ and is def\/ined
 by an explicit solution of the Liouville equation~\eqref{eq1_2};
\item[--] the Liouville equation~\eqref{eq1_2} admits an explicit
 solution for density $\rho(\boldsymbol{x})$ which has
 singularities at some points of the phase space; we shall call
 such a measure {\it singular};
\item[--] it is impossible to obtain an explicit Liouville solution,
 and in the phase space of the system there exist attracting
 invariant sets~-- attractors\footnote{As a rule, attractors can
 be detected only by numerical investigations.} (f\/ixed points,
 limit cycles, strange attractors etc.); in this case we shall say
 that {\it there exists no} invariant measure.
\end{itemize}

Further, we illustrate the above constructions and considerations by
various nonholonomic systems. Nonholonomic systems are characterized by
the presence of nonintegrable constraints, resulting in fairly general
systems of dif\/ferential equations, which in the case of homogeneous (in
velocities) constraints possess a f\/irst energy integral (see, e.g.,
\cite{Bizz} for details). The problem of existence of an invariant measure
for nonholonomic systems is discussed, for example, in~\cite{Bizz, b42, b43,
Kozlovv}.

In all the examples of Hamiltonizable systems considered here the
dimension of the phase space~$\mathcal{M}^n$ is equal to f\/ive ($n=5$).
Moreover, depending on the existence of an invariant measure and global
Casimir functions, a large number of types of Hamiltonian systems can
arise (with a~Poisson bracket of nonmaximal rank).
Nevertheless, as a rule, it is impossible to detect all these types in
practice. We list here combinations that are encountered in the examples
considered.
\begin{itemize}\itemsep=0pt
\item[1)] There exist an invariant measure and three global Casimir
 functions: a~nonholonomic hinge, one of the bodies is a plate (see
 Section~\ref{secpl});
\item[2)] There exist an invariant measure and two global Casimir
 functions: a~nonholonomic hinge in the general case
 (see Section~\ref{sh2});
\item[3)] There exist an invariant measure and one global Casimir
 function: the Suslov problem under the condition that the
 constraint is imposed along the principal axis of inertia (see
 Section~\ref{sysl01});
\item[4)] There exist a singular invariant measure and three global
 Casimir functions: the Chaplygin sleigh on a horizontal plane (see
 Section~\ref{sen01});
\item[5)] There are no invariant measure and no global Casimir
 functions: the Suslov problem in a gravitational f\/ield (see
 Section~\ref{sysl02}) and the Chaplygin sleigh on an inclined
 plane (see Section~\ref{sen02}).
\end{itemize}

\section{Nonholonomic hinge}\label{section2}
\subsection{Equations of motion}

Consider the problem of the free motion of a system of two bodies coupled
by a nonholonomic hinge. The outer body is a homogeneous spherical shell
inside which a rigid body moves; the rigid body is connected with the
shell by means of sharp wheels in such a way as to exclude relative
rotations about the vector~$\boldsymbol{e}$ f\/ixed in the inner body (see
Fig.~\ref{fig03}).

\begin{figure}[t]
\centering
\includegraphics{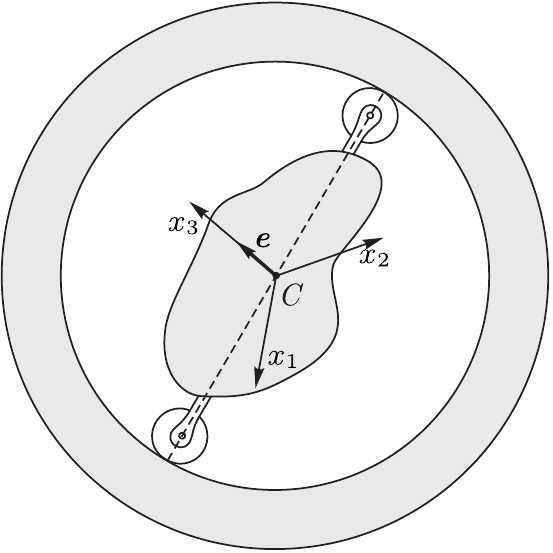}
\caption{A system of two bodies coupled by a nonholonomic hinge.}\label{fig03}
\end{figure}

In what follows we shall call this system a~{\it nonholonomic hinge}
(this problem was previously studied in~\cite{BBB, BBM02}).

Choose a moving coordinate system $Cx_1x_2x_3$ attached to the inner body.
Then, if we denote the angular velocities of the inner body and the spherical shell by $\boldsymbol{\omega}$ and $\boldsymbol{\Omega}$,
the constraint takes the form
\begin{gather*}
\omega_3-\Omega_3=0.
\end{gather*}
In the absence of external forces the evolution of the angular velocities is governed by the following equations:
\begin{gather}
\dot{\Omega}_1=\omega_3^{}(\Omega_2^{}-\omega_2^{}), \qquad \dot{\Omega}_2=\omega_3^{}(\omega_1^{}-\Omega_1^{}),\nonumber\\
I_1\dot{\omega}_1=(I_2 - I_3)\omega_2^{}\omega_3^{}, \qquad I_2\dot{\omega}_2=(I_3 - I_1)\omega_1^{}\omega_3^{} , \nonumber\\
(I_\text{s} + I_3)\dot{\omega}_3=I_\text{s}(\Omega_1\omega_2 - \Omega_2\omega_1) + (I_1 - I_2)\omega_1\omega_2,\label{Sy1}
\end{gather}
where ${ \bf I}=\operatorname{diag}(I_1^{},I_2^{},I_3^{})$ is
the tensor of inertia of the inner body and $I_\text{s}^{}$ is the tensor of inertia of the shell.

\subsection{First integrals and the Poisson bracket}\label{section1.2}
The system
\eqref{Sy1} possesses a standard invariant measure and conserves the energy:
\begin{gather*}
d\Omega_1^{} d\Omega_2^{} d\omega_1^{} d\omega_2^{} d\omega_3^{}, \qquad
E=\frac12I_\text{s}^{}(\Omega_1^2+\Omega_2^2)+\frac12\big(I_1^{}\omega_1^2+I_2^{}\omega_2^2+(I_3^{}+I_\text{s}^{})\omega_3^2\big).
\end{gather*}
In order to use Theorem \ref{t01}, we choose
\begin{gather*}
\boldsymbol{u}=\frac{1}{\omega_3}\frac{\partial}{\partial \omega_3}
\end{gather*}
as $\boldsymbol{u}$. Then, if we denote the initial vector f\/ield \eqref{Sy1} by $\boldsymbol{v}$, we obtain
\begin{gather*}
[\boldsymbol{v}, \boldsymbol{u}]=-\omega_3^{-1} \boldsymbol{v}, \qquad \boldsymbol{u}(E)=2(I_3^{}+I_\text{s}^{}).
\end{gather*}
Thus, Theorem \ref{t01} holds and the system \eqref{Sy1} can be
represented in Hamiltonian form with a~Poisson bracket in the variables
$\boldsymbol{x}=(\Omega_1, \Omega_2, \omega_1, \omega_2, \omega_3)$ of the
form
\begin{gather*}
{\bf J}^{(2)}=\boldsymbol{v}\wedge\boldsymbol{u}=\begin{pmatrix}
0& 0 & 0 & 0 & \Omega_2^{}-\omega_2^{} \\
0& 0 & 0 & 0 & -(\Omega_1^{}-\omega_1^{}) \vspace{1mm}\\
0& 0 & 0 & 0 & \frac{I_2^{}-I_3^{}}{I_1^{}}\omega_2^{}\vspace{1mm}\\
0& 0 & 0 & 0 & -\frac{I_1^{}-I_3^{}}{I_2^{}}\omega_1^{}\vspace{1mm}\\
*&*&*&*&0
\end{pmatrix} ,
\end{gather*}
where the asterisks denote nonzero matrix entries resulting from the
skew-symmetry condition~${\bf J}^{(2)}$. We note that even though the
vector f\/ield~$\boldsymbol{u}$ has a singularity when $\omega_3 = 0$, the
resulting Poisson structure for the system is smooth.

The found Poisson bracket corresponds to the {\it solvable Lie algebra}. According to the classif\/ication of~\cite{wint},
this is the algebra $A^{spq}_{5,17}$ with $p=q=0$ (see~\cite{BBM02} for details).

\subsection[A f\/lat inner body ($I_3=I_1+I_2$)]{A f\/lat inner body ($\boldsymbol{I_3=I_1+I_2}$)}\label{secpl}

In the case where the inner body is f\/lat, i.e., $I_3=I_1+I_2$, for
${\bf J}^{(2)}$ we f\/ind all
$k=n-\operatorname{rank} {\bf J}^{(2)}=3$ Casimir functions
\begin{gather*}
C_1(\boldsymbol{x})=\omega_1^2+\omega_2^2,\qquad
C_2(\boldsymbol{x})=\Omega_1^2+(\omega_2 - \Omega_2)^2,\qquad
C_3(\boldsymbol{x})=\Omega_2^2+(\omega_1 - \Omega_1)^2,
\end{gather*}
i.e., we obtain a case that is the simplest from the point of view of integrability.

Indeed, let us restrict the system \eqref{Sy1} to the symplectic leaf by
making the change of variables
\begin{gather*}
\omega_1=c_1\cos\varphi, \qquad \omega_2=c_1\sin\varphi, \\
\Omega_1=c_2\cos(\varphi + c_3), \qquad \Omega_2=c_1\sin\varphi -
c_2\sin(\varphi + c_3),
\end{gather*}
where $\varphi\in[0,2\pi)$ is the angular coordinate, and
 $c_1$, $c_2$, and $c_3$ parameterize the symplectic leaf
\begin{gather*}
C_1(\boldsymbol{x})=\sqrt{c_1}, \qquad C_2(\boldsymbol{x})=\sqrt{c_2}, \qquad C_3(\boldsymbol{x})=c_1^2 - 2c_1c_2\cos c_3 + c_2^2.
\end{gather*}
As a result, we obtain a system with one degree of freedom $(\varphi,
\omega_3)$ and a~canonical Poisson bracket
\begin{gather}
\label{Eqb03}
\dot{\varphi}=\frac{\partial \widetilde{H}}{\partial \omega_3}, \qquad \dot{\omega}_3=-\frac{\partial \widetilde{H}}{\partial \varphi},\nonumber\\
\widetilde{H}=\frac{1}{2}\omega_3^2 + \frac{c_1^2(I_1\cos^2\varphi + (I_s + I_2)\sin^2\varphi)}{2(I_s + I_1 + I_2)} - \frac{I_s c_1c_2\sin(\varphi + c_3)}{I_s + I_s + I_2}.
\end{gather}
Some of its typical trajectories are shown in Fig.~\ref{fig003}.

\begin{figure}[t]
\centering
\includegraphics[totalheight=4cm]{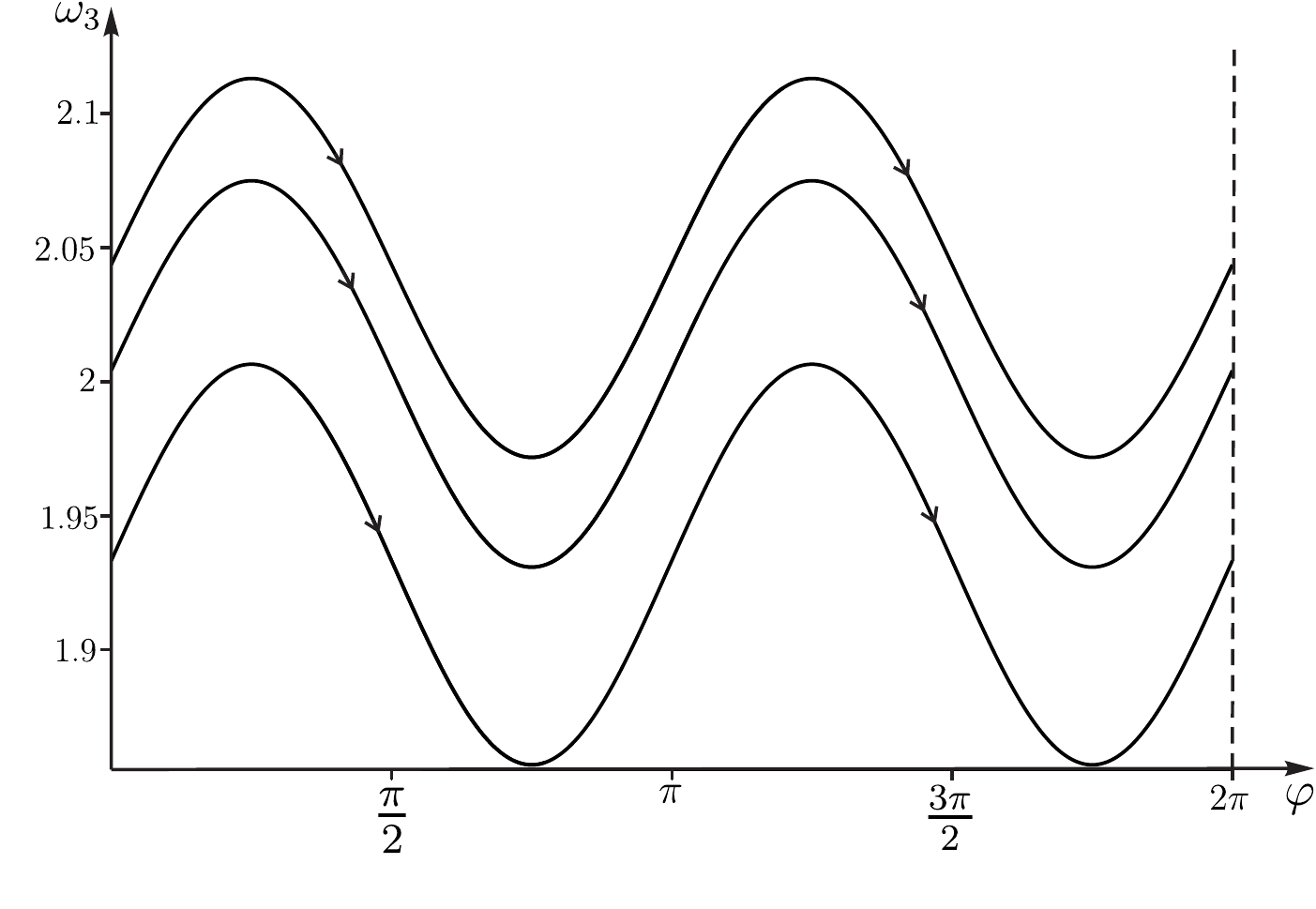}
\caption{Typical trajectories $\eqref{Eqb03}$ for $I_1=2$, $I_2=3$, $I_{\text{s}}=1$, $c_1=1$,
$c_2=2$, $c_3=\frac{\pi}{3}$.}\label{fig003}
\end{figure}

\subsection{The general case}\label{sh2}

In the general case the Poisson bracket possesses only two global
Casimir functions~\cite{BBB, BBM02}
\begin{gather*}
C_1(\boldsymbol{x})=I_1(I_1-I_3)\omega_1^2 + I_2(I_2-I_3)\omega_2^2,
\qquad C_2(\boldsymbol{x})=(I_1\omega_1 - I_3\Omega_1)^2 + (I_2\omega_2 - I_3\Omega_2)^2,
\end{gather*}
while a third one does not exist in the general case (it is def\/ined only
locally).

Indeed, let us introduce coordinates on the symplectic leaf and assume that $I_3<I_2<I_1$:
\begin{gather*}
\omega_1=\sqrt{\frac{c_1}{I_1(I_1 - I_3)}}\sin\varphi_1, \qquad \omega_2=\sqrt{\frac{c_1}{I_2(I_2 - I_3)}}\cos\varphi_1, \\
\Omega_1=\sqrt{\frac{I_1c_1}{I_1 - I_3}}\frac{\sin\varphi_1}{I_3} - \sqrt{c_2}\frac{\sin\varphi_2}{I_3}, \qquad
\Omega_2=\sqrt{\frac{I_2c_1}{I_2 - I_3}}\frac{\cos\varphi_1}{I_3} - \sqrt{c_2}\frac{\cos\varphi_2}{I_3},
\end{gather*}
where $\varphi_1,\varphi_2\in[0,2\pi)$ and $C_1(\boldsymbol{x})=c_1$, $C_2(\boldsymbol{x})=c_2$.

In addition, we restrict the system to the level set of the energy integral $H=h$ by making the substitution
\begin{gather*}
\omega_3(\varphi_1,\varphi_2)=\pm \sqrt{2h - \frac{2Q}{I_3^2(I_3 + I_\text{s})}}, \\
Q=\frac{I_\text{s}c_2}{2} + \frac{c_1}{2}\left( \frac{I_\text{s}I_1 + I_3^2}{I_1 - I_3}\sin^2\varphi_1 +
\frac{I_\text{s}I_2 + I_3^2}{I_2 - I_3}\cos^2\varphi_1 \right) \\
\hphantom{Q=}{}
- I_\text{s}\sqrt{c_2}\left(\sqrt{\frac{I_1c_1}{I_1 - I_3}} \sin\varphi_1\sin\varphi_2 +
\sqrt{\frac{I_2c_1}{I_2 - I_3}}\cos\varphi_1\cos\varphi_2 \right).
\end{gather*}
As a result, the equations of motion can be represented as
\begin{gather}
\label{eqB07}
\dot{\varphi}_1=k\omega_3(\varphi_1,\varphi_2), \qquad \dot{\varphi}_2=\omega_3(\varphi_1,\varphi_2), \qquad k^2=\frac{(I_1 - I_3)(I_2 - I_3)}{I_1I_2}.
\end{gather}

In the case where $\omega_3(\varphi_1,\varphi_2)$ vanishes nowhere and $k$
is irrational, the trajectories~\eqref{eqB07} are rectilinear orbits on a
torus (see Fig.~\ref{fig05}).

\begin{figure}[t]
\centering
\includegraphics[totalheight=5cm]{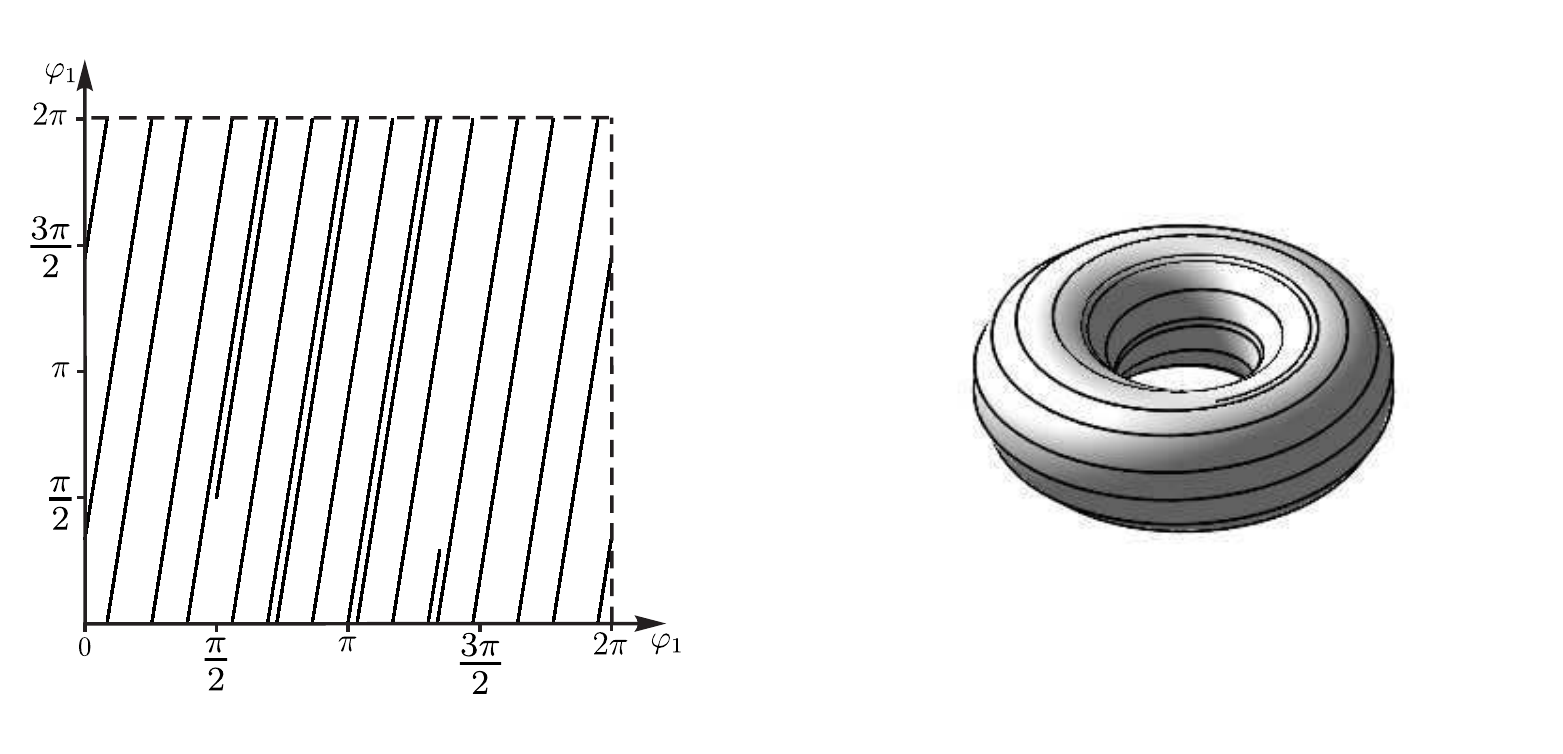}
\caption{Trajectory \eqref{eqB07} for $I_1=5.2$, $I_2=4.3$, $I_3=3.9$, $I_{\text{s}}=3.6$, $c_1=10$,
$c_2=40$, $h=10$.}\label{fig05}
\end{figure}

Such behavior is partially due to the fact that in the general case the
missing solution~\eqref{ma01} in the chosen (local) coordinates is
represented as
\begin{gather*}
\varphi_1 - k \varphi_2={\rm const},
\end{gather*}
whence it follows that the projection of the symplectic leaf on
($\varphi_1, \varphi_2$) is multi-valued.

A topological analysis of the system \eqref{Sy1} is presented in~\cite{BBB}. In particular, it is shown that in addition to a torus there
are two other types of integral surfaces: a~sphere and a sphere with three
handles (two-dimensional orientable surface of genus~3).

\section{The Suslov problem}\label{section3}
\subsection{Equations of motion and f\/irst integrals}
\label{sec02}

If we assume the spherical shell from the preceding example to be f\/ixed, we obtain
the so-called Suslov problem (in Vagner's interpretation, see Fig.~\ref{figss}),
which describes the motion of a rigid body with a f\/ixed point subject to the nonholonomic constraint
\begin{gather*}
(\boldsymbol{\omega}, \boldsymbol{e})=0,
\end{gather*}
where $\boldsymbol{\omega}$ is the angular velocity of the body
and $\boldsymbol{e}$ is the vector f\/ixed in the body.

\begin{figure}[t]
\centering
\includegraphics{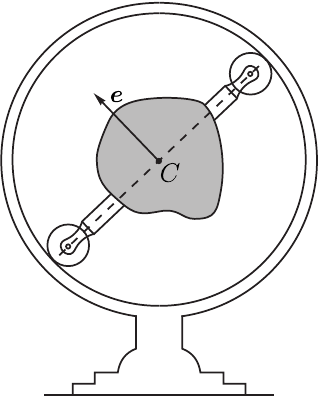}
\caption{Realization of the Suslov problem.}
\label{figss}
\end{figure}

Choose a moving coordinate system attached to the inner body, with one of the axes directed along $\boldsymbol{e}$.
Then the tensor of inertia of the moving body can be represented as
\begin{gather*}
{\bf I}=\begin{pmatrix}
I_{11}& 0 & I_{13} \\
0& I_{22} & I_{23} \\
I_{13}& I_{23} & I_{33} \\
\end{pmatrix} .
\end{gather*}
In the axisymmetric potential f\/ield $U=U(\boldsymbol{\gamma})$ the problem reduces to the investigation of the following closed system of equations~\cite{BKM}:
\begin{gather}
I_{11}\dot\omega_1=-\omega_2(I_{13}\omega_1+I_{23}\omega_2)+
\gamma_2\frac{\partial U}{\partial\gamma_3}-
\gamma_3\frac{\partial U}{\partial\gamma_2},\nonumber\\
I_{22}\dot\omega_2=\omega_1(I_{13}\omega_1+I_{23}\omega_2)+
\gamma_3\frac{\partial U}{\partial\gamma_1}-
\gamma_1\frac{\partial U}{\partial\gamma_3},\nonumber\\
\dot\gamma_1=-\gamma_3\omega_2,\qquad
\dot\gamma_2=\gamma_3\omega_1,\qquad
\dot\gamma_3=\gamma_1\omega_2-\gamma_2\omega_1,\label{ss01}
\end{gather}
where $\boldsymbol{\gamma}=(\gamma_1, \gamma_2,\gamma_3)$ is the unit vector of the f\/ixed coordinate system referred to the moving axes.

The system~\eqref{ss01} conserves the energy and possesses the geometrical integral
\begin{gather*}
E=\frac{1}{2}\big(I_{11} \omega_1^2 + I_{22} \omega_2^2\big) + U, \qquad F=\gamma_1^2 + \gamma_2^2 + \gamma_3^2=1.
\end{gather*}

We now consider successively several particular cases~\eqref{ss01}.

\subsection[Case $I_{13}=I_{23}=0$ and $U=U(\gamma_1, \gamma_2)$]{Case $\boldsymbol{I_{13}=I_{23}=0}$ and $\boldsymbol{U=U(\gamma_1, \gamma_2)}$}\label{sysl01}

Suppose that $I_{13}=I_{23}=0$, i.e., {\it the constraint is imposed along the principal axis of inertia} and the potential f\/ield has the form $U=U(\gamma_1, \gamma_2)$. Then
the system~\eqref{ss01} possesses a standard invariant measure.

Further, we introduce the vector f\/ield
\begin{gather*}
\boldsymbol{u}=\frac{1}{\gamma_3}\frac{\partial}{\partial \gamma_3},
\end{gather*}
for which
\begin{gather*}
[\boldsymbol{v}, \boldsymbol{u}]=-\gamma_3^{-1}\boldsymbol{v}, \qquad \boldsymbol{u}(F)=2.
\end{gather*}
Consequently, it follows from Theorem~\ref{t01} that the initial system~\eqref{ss01} can be represented in Hamiltonian form with the Poisson bracket
\begin{gather*}
{\bf J}^{(2)}=\boldsymbol{v}\wedge\boldsymbol{u}=\begin{pmatrix}
0& 0 & 0 & 0 & -\frac{1}{I_{11}}\frac{\partial U}{\partial \gamma_2} \vspace{1mm}\\
0& 0 & 0 & 0 & \frac{1}{I_{22}}\frac{\partial U}{\partial \gamma_1} \\
0& 0 & 0 & 0 & -\omega_2\\
0& 0 & 0 & 0 & \omega_1\\
*&*&*&*&0
\end{pmatrix} ,
\end{gather*}
where the asterisks denote nonzero matrix entries resulting from the
skew-symmetry condition~${\bf J}^{(2)}$.

\begin{proposition}
In the general case, the bracket ${\bf J}^{(2)}$ possesses one global Casimir function~-- the energy integral~$E$.
\end{proposition}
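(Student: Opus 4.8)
The plan is to characterize all Casimir functions of $\mathbf{J}^{(2)}=\boldsymbol{v}\wedge\boldsymbol{u}$ directly from the definition. By the general remark following \eqref{meq06}, a function $C$ is a Casimir of a rank-2 decomposable bracket $\boldsymbol{v}\wedge\boldsymbol{u}$ if and only if it is a common first integral of both fields, i.e. $\boldsymbol{u}(C)=0$ and $\boldsymbol{v}(C)=0$. Since $\boldsymbol{u}=\gamma_3^{-1}\partial/\partial\gamma_3$, the condition $\boldsymbol{u}(C)=0$ simply says $\partial C/\partial\gamma_3=0$, i.e. $C=C(\omega_1,\omega_2,\gamma_1,\gamma_2)$ does not depend on $\gamma_3$. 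So the task reduces to finding all first integrals of the Suslov vector field \eqref{ss01} (with $I_{13}=I_{23}=0$, $U=U(\gamma_1,\gamma_2)$) that are independent of $\gamma_3$.

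First I would write out the operator $\boldsymbol{v}$ acting on such a $C(\omega_1,\omega_2,\gamma_1,\gamma_2)$. Using the equations of motion, $\dot\gamma_1=-\gamma_3\omega_2$, $\dot\gamma_2=\gamma_3\omega_1$, and $I_{11}\dot\omega_1=\gamma_2 U_{\gamma_3}-\gamma_3 U_{\gamma_2}$, $I_{22}\dot\omega_2=\gamma_3 U_{\gamma_1}-\gamma_1 U_{\gamma_3}$; here $U=U(\gamma_1,\gamma_2)$ so $U_{\gamma_3}=0$ and the $\omega$-equations simplify to $I_{11}\dot\omega_1=-\gamma_3 U_{\gamma_2}$, $I_{22}\dot\omega_2=\gamma_3 U_{\gamma_1}$. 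Hence
\begin{gather*}
\boldsymbol{v}(C)=\gamma_3\left(-\omega_2\frac{\partial C}{\partial\gamma_1}+\omega_1\frac{\partial C}{\partial\gamma_2}-\frac{U_{\gamma_2}}{I_{11}}\frac{\partial C}{\partial\omega_1}+\frac{U_{\gamma_1}}{I_{22}}\frac{\partial C}{\partial\omega_2}\right).
\end{gather*}
The overall factor $\gamma_3$ is harmless (it cannot vanish identically), so $\boldsymbol{v}(C)=0$ is equivalent to the vanishing of the bracketed expression. But that bracket is, up to sign, precisely $\{C,E\}$ for the canonical symplectic structure $d\omega_1\wedge d\gamma_1/I_{11}\,+\,\ldots$ no — more simply, it is the Poisson bracket of $C$ with $E=\tfrac12(I_{11}\omega_1^2+I_{22}\omega_2^2)+U$ in the canonical variables $(\gamma_1,I_{11}\omega_1)$, $(\gamma_2,I_{22}\omega_2)$: indeed $\partial E/\partial\omega_1=I_{11}\omega_1$, $\partial E/\partial\gamma_1=U_{\gamma_1}$, so $\{C,E\}=\sum_i(\partial C/\partial\gamma_i\,\partial E/\partial(I_{ii}\omega_i)-\ldots)$ reproduces the bracket above. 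Thus the $\gamma_3$-independent Casimirs of $\mathbf{J}^{(2)}$ are exactly the first integrals of the two-degree-of-freedom natural Hamiltonian system with Hamiltonian $E(\omega_1,\omega_2,\gamma_1,\gamma_2)$.

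The remaining step is to argue that \emph{in the general case} (generic $U$, generic $I_{11}\neq I_{22}$) this reduced two-degree-of-freedom system has no integral independent of $E$ — so the only Casimir is $E$ itself (and $F=\gamma_1^2+\gamma_2^2+\gamma_3^2$ is not a Casimir since it depends on $\gamma_3$; likewise $C_1=\omega_1^2+\omega_2^2$ fails unless $I_{11}=I_{22}$ and $U$ is radial). I would present this as the expected heart of the argument: a generic natural Hamiltonian system with two degrees of freedom is non-integrable, hence admits no second analytic integral globally, so $m'=1$ and the single Casimir is $E$. The hardest part is making ``in the general case'' precise — one should either invoke the well-known genericity of non-integrability (e.g. via Poincaré's method of small divisors / Ziglin-type obstructions for a representative potential such as $U=\varepsilon\gamma_1^2\gamma_2^2$ or an asymmetric quadratic-plus-quartic potential), or simply exhibit one potential $U$ for which a direct computation (or numerical Poincaré section, as the paper does elsewhere) shows no additional integral, and note that integrability is a codimension-$\infty$ phenomenon. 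I would keep this at the level the paper itself uses — stating that the reduced system is, for generic $U$, a non-integrable two-degree-of-freedom Hamiltonian system, so that $E$ exhausts the global Casimirs, while emphasizing that in special symmetric cases (treated in the flat-body analogue and in Section~\ref{sysl01}'s measure discussion) extra Casimirs such as $\omega_1^2+\omega_2^2$ do appear. This isolates the real content: everything is elementary and computational except the appeal to generic non-integrability, which is where I expect to lean on standard theory rather than a self-contained proof.
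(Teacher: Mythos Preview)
Your proposal is correct and follows essentially the same route as the paper: both reduce the Casimir condition to finding $\gamma_3$-independent first integrals, recognize the resulting four-dimensional system as a natural two-degree-of-freedom Hamiltonian system (a point moving in the plane in the potential $U$), and then appeal to generic non-integrability of such systems to conclude that $E$ is the only global Casimir. The paper makes the last step slightly more explicit by passing to canonical coordinates $p_i=\sqrt{I_i}\,\omega_i$, $q_1=\sqrt{I_2}\,\gamma_1$, $q_2=\sqrt{I_1}\,\gamma_2$, but the level of rigor on ``generic non-integrability'' is the same as yours.
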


\begin{proof}
First of all, we note that the Casimir functions ${\bf J}^{(2)}$ are simultaneously the integrals of the system
\begin{gather}\label{ss0111}
\dot{\omega}_1=-\frac{1}{I_{11}}\frac{\partial U}{\partial \gamma_2}, \qquad
\dot{\omega}_2= \frac{1}{I_{22}}\frac{\partial U}{\partial \gamma_1}, \qquad
\dot{\gamma}_1=-\omega_2, \qquad \dot{\gamma}_2=\omega_1,
\end{gather}
which is Hamiltonian with Hamiltonian $E$ and the Poisson bracket
\begin{gather*}
\{\omega_1,\gamma_2\}=-\frac{1}{I_{1}}, \qquad \{\omega_2,\gamma_1\}=\frac{1}{I_{2}}.
\end{gather*}
Further, if we introduce new variables
\begin{gather*}
p_1=\sqrt{I_1}\omega_1, \qquad p_2=\sqrt{I_2}\omega_2, \qquad
q_1=\sqrt{I_2}\gamma_1, \qquad q_2=\sqrt{I_1}\gamma_2,
\end{gather*}
then the system \eqref{ss0111} reduces to investigating a vector f\/ield
with a canonical Poisson bracket and a Hamiltonian of the form
\begin{gather}\label{EEqq1}
H=\frac{1}{2}(p_1^2 + p_2^2) + U\left(\frac{q_1}{I_2}, \frac{q_2}{I_1}\right) .
\end{gather}
This is a natural system which describes the motion of a material point on
a~plane and in which, as is well known, there are generally no additional
f\/irst integrals, and hence the resulting Poisson bracket has no remaining
global Casimir functions.
\end{proof}

\begin{remark}
However, the system \eqref{EEqq1} has the well-know potentials $U$ for
which there exist (one or two) additional integrals (and hence Casimir
functions) of dif\/ferent degrees in momenta (see, e.g., the recent papers
\cite{YosM, YosM2} or \cite{Henon02, Henon01} for the
H\'{e}non--Heiles system).
\end{remark}

\subsection[Case $I_{13}\neq0$, $I_{23}\neq0$ and $U=(a, \gamma)$]{Case $\boldsymbol{I_{13}\neq0}$, $\boldsymbol{I_{23}\neq0}$ and $\boldsymbol{U=(\boldsymbol{a}, \boldsymbol{\gamma})}$ }
\label{sysl02}

Now consider the more general situation $I_{13}\neq0$, $I_{23}\neq0$, but in the potential f\/ield $U=(\boldsymbol{a}, \boldsymbol{\gamma})$.

In this case, the equations of motion are invariant under the transformation
\begin{gather*}
\omega_1 \rightarrow \lambda \omega_1, \qquad \omega_2 \rightarrow \lambda \omega_2, \qquad
\boldsymbol{\gamma} \rightarrow \lambda^2 \boldsymbol{\gamma}, \qquad dt \rightarrow \lambda^{-1} dt,
\end{gather*}
which is associated to the vector f\/ield
\begin{gather*}
\widehat{\boldsymbol{u}}=\omega_1\frac{\partial}{\partial \omega_1 } +
\omega_2\frac{\partial}{\partial \omega_2 } +
2\gamma_1\frac{\partial}{\partial \omega_1 } + 2\gamma_2\frac{\partial}{\partial \omega_2 } +
2\gamma_3\frac{\partial}{\partial \omega_3 },
\end{gather*}
for which we have
\begin{gather*}
[\boldsymbol{u}, \boldsymbol{v}]=\boldsymbol{v}, \qquad \boldsymbol{u}(E)=2E.
\end{gather*}
Thus, the system under consideration can be represented in conformally Hamiltonian form
with the Poisson bracket
\begin{gather*}
{\bf J}^{(2)}=\boldsymbol{v}\wedge\boldsymbol{u}.
\end{gather*}
In the case at hand, ${\bf J}^{(2)}$ has a rather cumbersome form, so we do not present it here explicitly.
We note that it is cubic in the velocities $\boldsymbol{\omega}$.

\begin{proposition}
In the general case, the system~\eqref{ss01}
has no invariant measure with smooth density in the potential field $U=(\boldsymbol{a}, \boldsymbol{\gamma})$.
\end{proposition}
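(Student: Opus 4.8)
The plan is to show that the Suslov system \eqref{ss01} with $I_{13}\neq0$, $I_{23}\neq0$ and $U=(\boldsymbol{a},\boldsymbol{\gamma})$ generically has no smooth invariant measure by locating an invariant set on which the divergence of the flow cannot be killed by any smooth density. First I would use the scaling symmetry $\widehat{\boldsymbol{u}}$ already exhibited: since $[\widehat{\boldsymbol{u}},\boldsymbol{v}]=\boldsymbol{v}$, the field $\boldsymbol{v}$ is not divergence-free with respect to any scaling-invariant volume, and more importantly the scaling lets us reduce attention to the invariant cone $E=0$ (or to the projectivized dynamics). The natural candidate for the obstruction is the set $\boldsymbol{\gamma}=0$ (equivalently $\gamma_1=\gamma_2=\gamma_3=0$), which is invariant under \eqref{ss01} because $\dot{\boldsymbol{\gamma}}$ is linear and homogeneous in $\boldsymbol{\gamma}$; on this set the $\omega$-subsystem becomes the undamped-looking but measure-obstructed free Suslov flow
\begin{gather*}
I_{11}\dot\omega_1=-\omega_2(I_{13}\omega_1+I_{23}\omega_2),\qquad
I_{22}\dot\omega_2=\omega_1(I_{13}\omega_1+I_{23}\omega_2).
\end{gather*}

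Second, I would analyze this planar system on $\{\boldsymbol{\gamma}=0\}$. It has the first integral $I_{11}\omega_1^2+I_{22}\omega_2^2$ (the energy restricted there), so its orbits are ellipses, but the motion along each ellipse is non-uniform: there are two equilibria where $I_{13}\omega_1+I_{23}\omega_2=0$, and one checks that one of them is a stable node/attractor and the other a source for the flow on the ellipse (this is the classical asymptotic behavior in the Suslov problem). The presence of an asymptotically stable equilibrium (an attractor) on the invariant submanifold is incompatible with the existence of a smooth positive invariant density: integrating ${\rm div}(\rho\boldsymbol{v})=0$ over a small neighborhood of the attractor and using the divergence theorem forces the net flux to vanish, contradicting the fact that volume is strictly contracted near an attracting fixed point whose linearization has eigenvalues with negative real part (trace $<0$). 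I would make this precise via the standard argument: if $\boldsymbol x_0$ is a hyperbolic sink of $\boldsymbol v$, then $\int_{B}{\rm div}(\rho\boldsymbol v)\,dx = \oint_{\partial B}\rho\boldsymbol v\cdot \boldsymbol n\, dS <0$ for a small ball $B$ with outward-transverse boundary, contradicting \eqref{eq1_2}.

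Third, I would explain why this conclusion on the invariant set $\{\boldsymbol{\gamma}=0\}$ propagates to the statement "no invariant measure with smooth density on $\mathcal M^5$": a smooth global density $\rho$ would restrict to a smooth density on any invariant submanifold after an appropriate transverse reduction — more carefully, one shows that the linearization of the full flow at the equilibrium $(\boldsymbol\omega_*,\boldsymbol 0)$ has trace with a definite (generically nonzero) sign, so the full five-dimensional flow already contracts (or expands) phase volume near that point, which directly contradicts ${\rm div}(\rho\boldsymbol v)=0$ for smooth positive $\rho$ by the same flux argument in $\mathbb R^5$. Computing this trace is the one genuine calculation: differentiate \eqref{ss01} at the point where $\boldsymbol\gamma=0$ and $I_{13}\omega_1+I_{23}\omega_2=0$, and read off $\sum_i \partial v_i/\partial x_i$; the $\boldsymbol\gamma$-block contributes the eigenvalues of the $\dot{\boldsymbol\gamma}$ linearization (which sum to $0$ on the constraint but need not individually), and the $\boldsymbol\omega$-block contributes the nonzero trace $-\tfrac{1}{I_{11}}I_{23}\omega_{2*}\cdot(\cdots)+\cdots$ coming from the quadratic terms.

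The main obstacle I anticipate is bookkeeping in the last step: one must be careful that "no smooth invariant measure" is a statement about the full phase space, and the cleanest route is to exhibit a single hyperbolic equilibrium (or limit cycle) of the full five-dimensional vector field $\boldsymbol v$ at which $\operatorname{div}\boldsymbol v\neq 0$ in a way that cannot be compensated — equivalently, an equilibrium whose Jacobian has eigenvalue-sum $\neq 0$, since then no smooth $\rho>0$ can solve ${\rm div}(\rho\boldsymbol v)=0$ in its neighborhood (evaluate the Liouville equation at the fixed point: $\boldsymbol v(\rho)+\rho\,{\rm div}\,\boldsymbol v=0$ gives $\rho(\boldsymbol x_0)\,{\rm div}\,\boldsymbol v(\boldsymbol x_0)=0$, forcing ${\rm div}\,\boldsymbol v(\boldsymbol x_0)=0$). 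So the proof ultimately reduces to: find an equilibrium of \eqref{ss01} with $U=(\boldsymbol a,\boldsymbol\gamma)$ for which $\operatorname{div}\boldsymbol v\neq0$, and check that "generically" (in $I_{ij}$ and $\boldsymbol a$) such a point exists; the equilibria are $\boldsymbol\omega=0$ together with $\boldsymbol\gamma$ an eigendirection of the matrix mixing $\boldsymbol\gamma$ with the torque $\boldsymbol\gamma\times\boldsymbol a$-type terms, and evaluating the divergence there gives a nonzero rational function of the parameters, which completes the argument.
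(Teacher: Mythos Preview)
Your proposal eventually converges to exactly the criterion the paper uses: exhibit a fixed point of the full vector field at which the trace of the linearization (equivalently $\operatorname{div}\boldsymbol v$) is nonzero, and invoke the standard fact that then no smooth positive density can satisfy the Liouville equation in a neighborhood. That part is correct and is the paper's argument.

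However, there is a concrete gap in your execution. A direct computation from \eqref{ss01} gives
\[
\operatorname{div}\boldsymbol v \;=\; -\frac{I_{13}}{I_{11}}\,\omega_2 \;+\; \frac{I_{23}}{I_{22}}\,\omega_1,
\]
which is linear and homogeneous in $(\omega_1,\omega_2)$. Hence every equilibrium with $\boldsymbol\omega=0$ (the family you settle on at the end, with $\boldsymbol\gamma\parallel\boldsymbol a$) has $\operatorname{div}\boldsymbol v=0$, so those points give no obstruction whatsoever. Your claim that ``evaluating the divergence there gives a nonzero rational function of the parameters'' is therefore false for precisely the equilibria you propose to use.

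The paper avoids this by choosing an equilibrium with $\boldsymbol\omega\neq 0$ lying on the physical level set $|\boldsymbol\gamma|^2=1$, namely $\omega_1=0$, $\omega_2^2=a_3/I_{23}$, $\boldsymbol\gamma=(0,1,0)$; there the trace equals $\dfrac{I_{13}}{I_{11}}\sqrt{a_3/I_{23}}\neq 0$ generically, and the conclusion follows immediately. Your earlier detour through the invariant set $\boldsymbol\gamma=0$ would in fact also supply equilibria with $\boldsymbol\omega\neq 0$ (those on the line $I_{13}\omega_1+I_{23}\omega_2=0$) and nonzero divergence, but that obstruction sits off the physical sphere $|\boldsymbol\gamma|^2=1$, so it proves the statement only for the extended system on $\mathbb{R}^2\times\mathbb{R}^3$; the paper's choice is sharper because it obstructs a smooth measure already in a neighborhood of the physical phase space. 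Drop the $\boldsymbol\gamma=0$ discussion and the $\boldsymbol\omega=0$ equilibria, and go straight to an $\boldsymbol\omega\neq 0$ fixed point on $|\boldsymbol\gamma|^2=1$.
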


\begin{proof} In the potential f\/ield $U=(\boldsymbol{a}, \boldsymbol{\gamma})$ the system~\eqref{ss01} possesses a f\/ixed point
\begin{gather*}
\omega_1=0, \qquad \omega_2=-\frac{a_3}{I_{23}}, \qquad \gamma_1=0, \qquad \gamma_2=1, \qquad \gamma_3=0
\end{gather*}
having a nonzero trace of the linearization matrix:
\begin{gather*}
\frac{I_{13}}{I_1}\frac{\sqrt{a_3}}{\sqrt{I_{23}}}.
\end{gather*}
Consequently, the system considered has no invariant measure with smooth density \cite{Kozlovv, Kozll, Kozlovv2} in a neighborhood of the f\/ixed point.

Indeed, in this case the characteristic polynomial of the linearized system is represented as
\begin{gather*}
P(\lambda)=\lambda^2\big( \lambda^3 - c_1\lambda^2 + c_2\lambda + c_3 \big), \\
 c_1=\frac{I_{13}\sqrt{a_3}}{I_1\sqrt{I_{23}}}, \qquad c_2=\frac{a_2}{I_1} -
\frac{I_1I_2+2I_{23}^2}{I_{23}I_1I_2}a_3, \qquad
c_3=\sqrt{a_3I_{23}}\left(\frac{I_{13}a_3}{I_1I_{23}^2} -
\frac{2a_1}{I_1I_2}\right) .
\end{gather*}
This polynomial is not reciprocal.
\end{proof}

In this system, an analogous result was obtained for $U=0$ previously
in~\cite{Kozlovv}. Later, however, a singular invariant measure \cite{BKM}
was found in this case.

We also note that the pair of brackets that was constructed for the Suslov problem is def\/ined in the extended phase space, as are the brackets described above.
\begin{gather*}
\mathbb{R}^2\times \mathbb{R}^3=\{(\omega_1, \omega_2, \gamma_1, \gamma_2, \gamma_3 )\}.
\end{gather*}
Indeed, for these brackets the geometric integral $F=\gamma_1^2 +\gamma_2^2 + \gamma_3^2$
is not a Casimir function, so it is not clear how to restrict them to the actual phase space of the system
\begin{gather*}
\mathbb{R}^2\times S^2=\{(\omega_1, \omega_2, \gamma_1, \gamma_2, \gamma_3 ), \gamma_1^2 +\gamma_2^2 + \gamma_3^2=1\}.
\end{gather*}
In view of the above
discussion (at the end of Section~\ref{section1.2}), these brackets cannot provide any insight into the system's properties.

\begin{remark}
It may seem that a singular invariant measure with density having
singularities in the considered family of f\/ixed points is possible in this
case too (see, e.g., Section~\ref{section4}). However, it is impossible to obtain a
suitable solution of the Liouville equation in this case. Moreover, as
numerical investigations show, this system may also have more complicated
attracting sets~-- limit cycles. Numerical investigations also show that
this system has no additional real-analytical integrals.
\end{remark}

\begin{remark}
A proof of the absence of meromorphic integrals for other (more
particular) cases of the Suslov problem was obtained in \cite{Fed,
Mahdi, Zig}. The methods developed in these papers can be used in this case
also, since the system \eqref{ss01} has a particular periodic solution
which for $I_{13}=0$, $a_1=0$, and $a_2=0$ has the form
\begin{gather*}
\omega_1(t)=2k \operatorname{cn}(t,k),\quad \omega_2(t)=0, \qquad \gamma_1(t)=0, \qquad \gamma_2(t)=-\frac{2I_1k}{a_3}\operatorname{sn}(t,k)\operatorname{dn}(t,k),\\
\gamma_3(t)=\frac{I_1}{a_3}\big(2k^2\operatorname{sn}^2(t,k) - 1\big), \qquad H=I_1\big(2k^2-1\big),
\end{gather*}
where $\operatorname{sn}(t,k)$, $\operatorname{cn}(t,k)$, $\operatorname{dn}(t,k)$ are the elliptic
Jacobi functions with parameter~$k$.
\end{remark}

\begin{remark}
We note that in \cite{BKM, Kozlovv2} other cases of representation of the
equations of the Suslov problem were found in Hamiltonian form, with a~Poisson bracket of rank~4, which cannot be obtained with the help of the
Hojman construction.
\end{remark}

\section{The Chaplygin sleigh}\label{section4}
\subsection{Equations of motion}
\label{sec03}

The Chaplygin sleigh is a rigid body moving on a horizontal plane and
supported at three points: two (absolutely) smooth posts and a knife edge
such that the body cannot move perpendicularly to the plane of the wheel.

As a historical remark, we note that although the Chaplygin sleigh is
usually linked to the works of Chaplygin \cite{b21-3} and Caratheodory~\cite{Caratheodory}, they were considered slightly earlier by Brill~\cite{Brill} as an example of the mechanism of a nonholonomic planimeter.

\begin{figure}[t]
\centering
\includegraphics[totalheight=5cm]{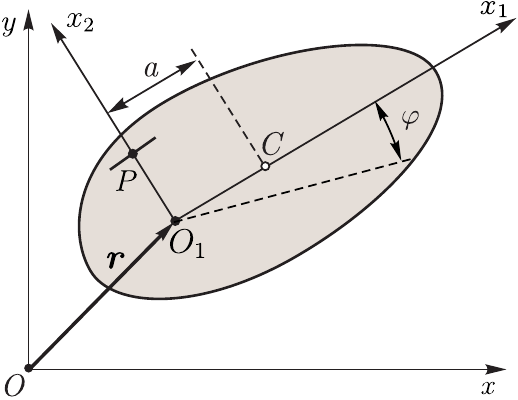}
\caption{The Chaplygin sleigh on a plane.}
\label{fig01}
\end{figure}

In this case, the conf\/iguration space coincides with the group of motions
of the plane ${\rm SE}(2)$. To parameterize it, we choose Cartesian coordinates
$(x,y)$ of point $O_1$ (see Fig.~\ref{fig01}) in the coordinate system
$Oxy$ and the angle $\varphi$ of rotation of the axes $O_1x_1x_2$ relative
to $Oxy$.

The motion of the Chaplygin sleigh in the variables $(v_1,\omega,
x, y,\varphi)$ is described by the equations (see \cite{BM} for details)
\begin{gather}
m\dot v_1=ma\omega^2-\frac{\partial U}{\partial x}\cos\varphi-\frac{\partial
U}{\partial y}\sin\varphi,\qquad(I+ma^2)\dot\omega=-ma\omega v_1-\frac{\partial
U}{\partial\varphi},\nonumber\\
\dot x=v_1\cos\varphi,\qquad\dot y=v_1\sin\varphi, \qquad\dot\varphi=\omega,\label{eq27052-3}
\end{gather}
where $v_1$ is the projection of the velocity of point $O_1$ on the axis
$O_1x_1$, and~$m$ and~$I$ are, respectively, the mass and the moment of
inertia of the body, $a$~is the distance specifying the position of the
knife edge, and~$U$ is the potential of the external forces.

The system \eqref{eq27052-3} conserves the energy integral
\begin{gather*}
E=\frac12\big(mv_1^2+\big(I+ma^2\big)\omega^2\big) + U.
\end{gather*}

Below we consider successively several particular cases.

\subsection{The Chaplygin sleigh on a horizontal plane}
\label{sen01}

{\bf Symmetry f\/ields and invariant measure.} In the absence of an external
potential f\/ield $U=0$ the system~\eqref{eq27052-3} admits the symmetry
group $SE(2)$ which is associated to the vector f\/ields
\begin{gather*}
\boldsymbol{u}_x=\frac{\partial}{\partial x}, \qquad \boldsymbol{u}_y=\frac{\partial}{\partial y}, \qquad
\boldsymbol{u}_{\varphi}=\frac{\partial}{\partial \varphi},
\end{gather*}
which are symmetry f\/ields.

Moreover, \eqref{eq27052-3} possess a singular invariant measure
\begin{gather*}
\omega^{-1}dv_1d\omega dx dy d\varphi.
\end{gather*}

{\bf Poisson bracket.}
We note that the system \eqref{eq27052-3} is invariant under the transformation
\begin{gather*}
v_1 \rightarrow \lambda v_1, \qquad \omega \rightarrow \lambda \omega, \qquad
dt \rightarrow \lambda^{-1} dt,
\end{gather*}
which is associated to the vector f\/ield
\begin{gather*}
\boldsymbol{u}=v_1\frac{\partial}{\partial v_1} + \omega\frac{\partial}{\partial \omega},
\end{gather*}
and
\begin{gather*}
[\boldsymbol{u},\boldsymbol{v}]=\boldsymbol{v}, \qquad \boldsymbol{u}(E)=2E,
\end{gather*}
where $\boldsymbol{v}$ is the initial vector f\/ield def\/ined by the system~\eqref{eq27052-3}.

Consequently, in the case $E\neq0$ the system \eqref{eq27052-3} is
represented in Hamiltonian form
\begin{gather*}
\dot{\boldsymbol{x}}={\bf J}^{(2)}\frac{\partial E}{\partial \boldsymbol{x}}, \qquad
\boldsymbol{x}=(v_1, \omega, x, y, \varphi),
\end{gather*}
with the Poisson bracket
\begin{gather*}
{\bf J}^{(2)}=\frac{\boldsymbol{v}\wedge\boldsymbol{u}}{2E}=\begin{pmatrix}
0& \frac{a\omega}{I + ma^2} & - \frac{v_1^2\cos \varphi}{mv_1^2 + (I + ma^2)\omega^2} & - \frac{v_1^2\sin \varphi}{mv_1^2 + (I + ma^2)\omega^2} &
-\frac{\omega v_1}{mv_1^2 + (I + ma^2)\omega^2}\\
*& 0 & -\frac{v_1\omega\cos\varphi}{mv_1^2 + (I + ma^2)\omega^2} & -\frac{v_1\omega \sin \varphi}{mv_1^2 + (I + ma^2)\omega^2} & -\frac{\omega^2}{mv_1^2 + (I + ma^2)\omega^2} \\
*& * & 0 & 0 & 0\\
*& * & * & 0 & 0\\
*&*&*&*&0
\end{pmatrix} ,
\end{gather*}
where the asterisks denote the matrix entries resulting from the skew-symmetry condition
${\bf J}^{(2)}$.

We examine in more detail a symplectic foliation which is def\/ined by the bracket ${\bf J}^{(2)}$.

First of all, we note that the manifold
\begin{gather*}
\mathcal{M}_0^4=\{ (\omega, v_1, x, y, \varphi) \,| \, \omega=0 \}
\end{gather*}
def\/ines a Poisson submanifold with the Casimir functions
\begin{gather*}
C_1=\varphi, \qquad C_2=x\sin\varphi - y\cos \varphi.
\end{gather*}
Thus, the entire phase space of the system $\mathcal{M}^5$ is a union of three nonintersecting Poisson submanifolds
\begin{gather*}
\mathcal{M}^5=\mathcal{M}^5_+\cup\mathcal{M}_0^4\cup\mathcal{M}^5_-, \\
\mathcal{M}^5_+=\{(\omega, v_1, x, y, \varphi)\,| \, \omega >0\}, \qquad
\mathcal{M}^5_-=\{(\omega, v_1, x, y, \varphi)\,| \, \omega <0\}.
\end{gather*}
Let us consider $\mathcal{M}^5_+$ in more detail and pass from $(v_1, \omega)$
to (polar) coordinates $(h, \psi)$:
\begin{gather*}
v_1=Aa h\cos\psi, \qquad \omega=h\sin\psi, \qquad
A^2=1+\frac{I}{ma^2}>1, \qquad E=\frac{1}{2}mA^2a^2h^2,
\end{gather*}
where $h\in(0, \infty)$ and $\psi\in(0,\pi)$. In the new variables $(\psi, x,y,\varphi, h)$ the Poisson bracket
becomes
\begin{gather}
\label{eqB1}
{\bf J}^{(2)}=\begin{pmatrix}
0& 0 & 0 & 0 & -hA^{-1}\sin\psi\\
0& 0 & 0 & 0 & hAa\cos\psi\cos\varphi \\
0& 0 & 0 & 0 & hAa\cos\psi\sin\varphi\\
0& 0 & 0 & 0 & h\sin\psi\\
*&*&*&*&0
\end{pmatrix} .
\end{gather}
As a result, the Casimir functions \eqref{eqB1} have the form
\begin{gather}
C_1=A\psi + \varphi, \qquad
C_2=x + A^2a\int_{\psi^*}^\psi \tan(\xi)\cos(A\xi - C_1)d\xi, \nonumber\\
C_3=y - A^2a\int_{\psi^*}^\psi \tan(\xi)\sin(A\xi - C_1)d\xi,\label{eqB34}
\end{gather}
where $\psi^*\in(0,\pi)$.

The level surface of the Casimir function $C_1={\rm const}$ foliates the
domain $\mathcal{M}^5_+$ into four-dimensional surfaces. The projection of
each such surface into the three-dimensional space
\begin{gather*}
S^1\times \mathbb{R}^2_+=\{ (\varphi, \omega, v_1)\,| \, \omega>0 \}
\end{gather*}
is a ruled surface (similar to a helicoid). Its equation can be
represented in the following parametric form
\begin{gather}
\label{eqB33}
v_1=Aah\cos\left(\frac{\varphi + C_1 }{A}\right) , \qquad
\omega=h\sin\left(\frac{\varphi + C_1 }{A}\right) , \qquad \varphi=\varphi.
\end{gather}
 In the def\/inition~\eqref{eqB33} the periodicity in $\varphi$ should, of
 course, be taken into account (see Fig.~\ref{fig0001}).

\begin{figure}[t]
\centering
\includegraphics[totalheight=5.5cm]{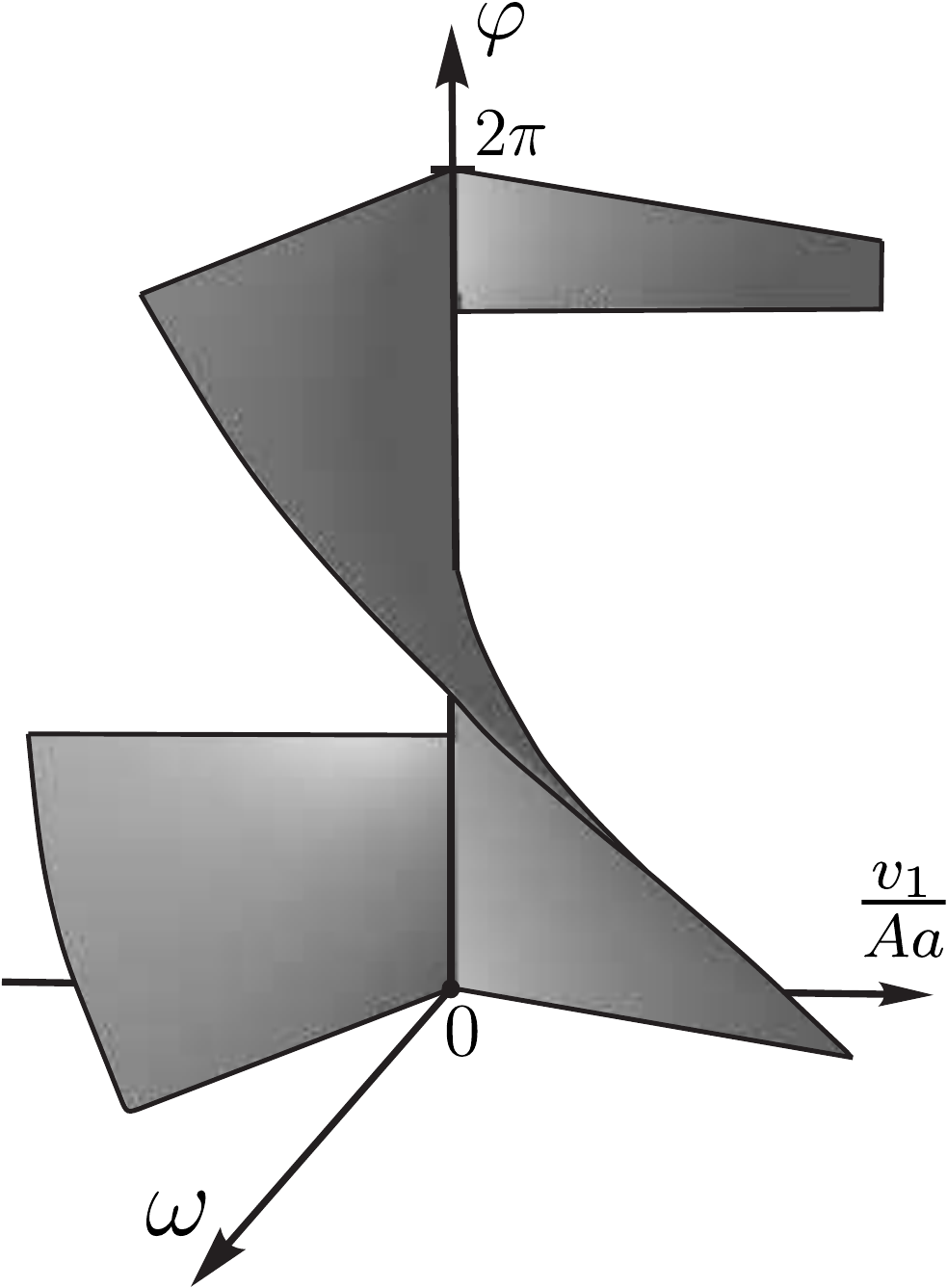}
\caption{Characteristic surface def\/ined by the Casimir function $C_1$ in
the domain $\mathcal{M}^5_+$ for $C_1=2$ and $A=3$.} \label{fig0001}
\end{figure}

Proceeding in a similar way for $\mathcal{M}^5_-$, it is easy to check
that relations~\eqref{eqB34} also def\/ine the Casimir functions with the
only dif\/ference that $\psi^*\in(\pi,2\pi)$.

{\bf The second Poisson bracket.} In this case it turns out that when
$E\neq0$, by Proposi\-tion~\ref{pro1}, the system~\eqref{eq27052-3} can be
represented in conformally Hamiltonian form
\begin{gather*}
\dot{\boldsymbol{x}}=(2E)^{(-1)}{\bf J}^{(4)}\frac{\partial E}{\partial \boldsymbol{x}},
\end{gather*}
with a Poisson bracket of rank~4 using the
symmetry f\/ields as follows: $\widehat{\boldsymbol{u}}_x$ and
$\widehat{\boldsymbol{u}}_y$
\begin{gather*}
{\bf J}^{(4)} = \boldsymbol{v}\wedge\boldsymbol{u} + \boldsymbol{u}_x\wedge\boldsymbol{u}_y =
\!\!\begin{pmatrix}
0& a\omega^3 + \frac{ma}{I + ma^2}\omega v_1^2 & - v_1^2\cos \varphi & - v_1^2\sin \varphi & -\omega v_1\\
*& 0 & -v_1\omega\cos\varphi & -v_1\omega \sin \varphi & -\omega^2 \\
*& * & 0 & 1 & 0\\
*& * & * & 0 & 0\\
*&*&*&*&0
\end{pmatrix} .
\end{gather*}

\begin{remark}
We note that the Jacobi identity for the matrix $\frac{{\bf J}^{(4)}}{E}$
does not hold.
\end{remark}

In this case, $\omega=0$ def\/ines the Poisson submanifold ${\bf J}^{(4)}$
on which ${\operatorname{rank} {\bf J}^{(4)}=2}$, and the Casimir functions are
\begin{gather*}
C_1=\varphi, \qquad C_2=Aa(x\sin\varphi - y \cos\varphi) - h^{-1}.
\end{gather*}
In the remaining region of the phase space $\overline{{\bf J}}$ possesses
the only Casimir function
\begin{gather*}
C_1=A\psi + \varphi, \qquad \tan\varphi=\frac{v_2}{Aa\omega}.
\end{gather*}

The bracket ${\bf J}^{(4)}$ was obtained previously in the initial
coordinates $(v_1,\omega, x, y, \varphi)$ in \cite{BM}.

\subsection{The Chaplygin sleigh on an inclined plane}
\label{sen02}

Below we consider the motion of the Chaplygin sleigh on an inclined plane.
We direct the axis~$Ox$ along the line of maximum slope, then
\begin{gather*}
U=m\mu (x +a\cos \varphi), \qquad \mu=g \sin \chi,
\end{gather*}
where $\chi$ is the angle of inclination of the plane to the horizon.
In this case, the energy integral has the form
\begin{gather*}
E=\frac{I + ma^2}{2}\omega^2 + \frac{mv_1^2}{2} + m\mu (x +a\cos \varphi).
\end{gather*}
Moreover, two symmetry f\/ields are preserved in the system~\eqref{eq27052-3}
\begin{gather*}
\boldsymbol{u}_x=\frac{\partial}{\partial x}, \qquad \boldsymbol{u}_y=\frac{\partial}{\partial y}.
\end{gather*}
Since $\boldsymbol{u}_x (E)=m\mu$, Theorem 1 applies to this
system.

As a result, its equations of motion are represented in Hamiltonian form
with the Poisson bracket ${\bf J}^{(2)}$ of rank 2:
\begin{gather*}
{\bf J}^{(2)}=\boldsymbol{v}\wedge\boldsymbol{u}_x=\begin{pmatrix}
0& \omega & 0 & 0 & 0\\
*& 0 & -v_1\sin\varphi & \mu \cos\varphi - a\omega^2 & \frac{ma}{I + ma^2}(\omega v_1-\mu\sin\varphi) \\
*& * & 0 & 0 & 0\\
*& * & * & 0 & 0\\
*&*&*&*&0
\end{pmatrix} ,
\end{gather*}
where $\boldsymbol{v}$ is the initial vector f\/ield def\/ined by the system~\eqref{eq27052-3}.

In \cite{BM} it is shown that the dynamics of this system, which is
noncompact, has a complicated, seemingly random behavior, leading to the
absence of an invariant measure and global analytic Casimir functions in
this system.

\section{Conclusion}\label{section5}

Thus, in this paper it has been shown that the Hojman construction (using
conformal symmetry f\/ields) allows one to obtain Poisson brackets for
various nonholonomic systems. As a rule, the resulting Poisson brackets do
not possess a maximal rank, and in the general case a smooth invariant
measure and global Casimir functions may be absent for these brackets.
Nevertheless, as shown in~\cite{BBB}, the above-mentioned Poisson brackets
can be useful to investigate stability problems.

\subsection*{Acknowledgments}
Section~\ref{section2} was prepared by A.V.~Borisov under the RSF
grant No.~15-12-20035. Section~\ref{section1} was written by I.S.~Mamaev within the
framework of the state assignment for institutions of higher education.
The work of I.A.~Bizyaev (Sections~\ref{section3} and~\ref{section4}) was supported by RFBR grant
No.~15-31-50172.
The authors thank A.V.~Tsiganov and A.V.~Bolsinov for useful
discussions and the referees for numerous comments, which have contributed
to the improvement of this paper.

\pdfbookmark[1]{References}{ref}
\LastPageEnding

\end{document}